\newcommand{\bydef}{:=}
\newcommand{\e}{\mathrm{e}}
\newcommand{\bigO}[1]{\mathop{\mathcal{O}}\left(#1\right)}
\newcommand{\A}{\ensuremath{\mathbb{A}}\xspace}
\newcommand{\lo}{\ensuremath{\mathsf{lo}}}
\newcommand{\complete}{\mathrm{C}}%{\mathrm{Complete}}
\newtheorem{thm}{Theorem}
\newtheorem{lem}{Lemma}
\title{Asymptotic expression for the fixation probability of a mutant in star graphs}
\author{Fabio A. C. C. Chalub\thanks{Departamento de Matem\'atica and Centro de Matem\'atica e Aplica\c c\~oes, Universidade Nova de Lisboa, Quinta da Torre, 2829-516, Caparica, Portugal. e-mail:chalub@fct.unl.pt, phone: (++351) 212948388}}
\date{\today}
\begin{document}

\maketitle

\centerline{Departamento de Matem\'atica and Centro de Matem\'atica e Aplica\c c\~oes,} 
\centerline{Universidade Nova de Lisboa, Quinta da Torre, 2829-516, Caparica, Portugal.}

\begin{abstract}
We consider the Moran process in a graph called the ``star'' and obtain the asymptotic expression for the fixation probability of a single mutant when the size of the graph is large. The expression obtained corrects the previously known expression announced in reference [E Lieberman, C Hauert, and MA Nowak. Evolutionary dynamics on
graphs. Nature, 433(7023):312–316, 2005] and further studied in [M. Broom and J. Rychtar. An analysis of the fixation probability of a
mutant on special classes of non-directed graphs. Proc. R. Soc. A-Math.
Phys. Eng. Sci., 464(2098):2609–2627, 2008]. We also show that the star graph is an accelerator of evolution, if the graph is large enough.
\end{abstract}

\textbf{Keywords}:
Evolutionary graph theory;  Moran process;  Fixation probability;  Asymptotic expansions; Star graph.

%\subjclass{92D25, 41A60}

\section{Introduction}

The central question in the mathematical study of population genetics is to understand how gene frequencies vary in time. This topic has a long history, since at least the works of Wright and Fisher~\cite{Wright_1931,Fisher_1930}. A simpler model was introduced by Moran~\cite{Moran}: in that case a population of constant size $n$ of two types evolves in discrete time steps. To each type we attribute a positive number, called fitness.  At each time one individual is selected to reproduce with probability proportional to its fitness and one to die (possibly the same one), with probability $1/n$. The so called Moran process continues until the entire population consists of individuals at a given type. We say that one type has \emph{fixated} while the other has become \emph{extinct}.  One particularly important question is the fixation probability of a given type \A, i.e., the probability that after a sufficiently long time, type \A reaches fixation; see~\cite{Traulsenetal2006,Nowak:06} 

The Moran process is a particular case of the so called \emph{Birth-Death (BD) processes with selection on the birth}. Birth-Death process are a particular case of stochastic process in which the number of individuals of a given type varies at most one per time step, i.e., transitions are allowed only between neighboring sites in an appropriate topology. The ``selection at birth'' indicates that the removed individual was selected using an equally distributed random variable, while the random variable indicating selection for reproduction is not necessarily identically distributed among all individuals. See, e.g.,~\cite{Nowak:06,Broom_book} for further informations on BD processes.

In~\cite{Lieberman_Hauert_Nowak_Nature2005}, topology, in the form of a graph, was introduced in the study of population dynamics. Individuals were represented by vertices, while edges represented possible positions for the offspring of individuals in a given vertex. In particular, consider a generic graph where all vertices but one (selected at random) are occupied by a type with fitness 1 (the resident type). The remaining vertex is occupied by a mutant with fitness $r>0$. After a certain fixed time, one of the individuals (resident or mutant) is selected to reproduce with probability proportional to fitness, and its offspring replaces one of the individuals occupying an adjacent vertex, selected with equal probability. This is the so called Moran process on graphs~\cite{Moran,Lieberman_Hauert_Nowak_Nature2005}. The invasion probability $\rho$ is given by the probability that after a sufficiently long time all vertices are occupied by the mutant type (i.e., the mutant has fixed and the resident has become extinct). For a complete graph (all pairs of vertices are connected) with $n$ vertices, the invasion probability is given by $\rho^{\complete}(r,n)\bydef\frac{1-r^{-1}}{1-r^{-n}}$ if $r\ne 1$ and $\rho^{\complete}(1,n)\bydef\frac{1}{n}$. This corresponds to the original problem studied in~\cite{Moran}. We say that a graph is an accelerator of evolution if $\rho>\rho^{\complete}$ if and only if $r>1$. If one of these inequalities is reversed, we say that the graph is a suppressor of evolution.
See~\cite{Nowak:06} for a more detailed explanation of concepts used in this article and also for examples of accelerators and suppressors of evolution.

In this article, we study a specific graph, called the ``star''. 
The star is a graph with $n+1$ vertices, labeled from 0 to $n$, where vertex number 0 is called the \emph{center} and all the others are the \emph{leaves}. The only edges are between vertex 0 and all other vertices. These edges are bidirectional, see figure~\ref{fig}.  In~\cite{Lieberman_Hauert_Nowak_Nature2005}, the star graph was introduced as a particular case of a family of graphs, called the ``superstar'' and parameterized by two natural numbers: $k$, the number of layers and $n$, the number of vertices at layer $k'+1$, for any $k'<k$, connected to any given vertex at layer $k'$. The first layer has a single vertex, called the \emph{center}. The star corresponds to $k=2$. According to the conjecture in~\cite{Lieberman_Hauert_Nowak_Nature2005}, $\rho\approx\frac{1-r^{-k}}{1-r^{-nk}}$ for $n\to\infty$. However, recently~\cite{Diaz} showed that this conjecture is false for $k\ge 5$.

We start this article from a previous work (see~\cite{Broom_Rychtar_PRSA2008}), where an explicit formula for the invasion probability in stars was derived and study the associated asymptotic expression when $n$, the number of individuals, is large. 
The aim of this work is to show that the expression for the leading term for $n\to\infty$ and fixed $r$ in the asymptotic expression of the invasion probability is given by
\begin{equation}\label{estrela_assintotica}
\rho_{\lo}(r,n)\bydef\frac{1-r^{-2}}{1-r^{-2n}\e^{(r^2-1)/r}}.
\end{equation}
A precise definition of $\rho_{\lo}(r,n)$ as the leading term of the exact expression in $n\to\infty$ and fixed $r$  is given in Eq.~(\ref{asymptotic_def}).
\begin{figure}
\centering
 \includegraphics[width=0.4\textwidth]{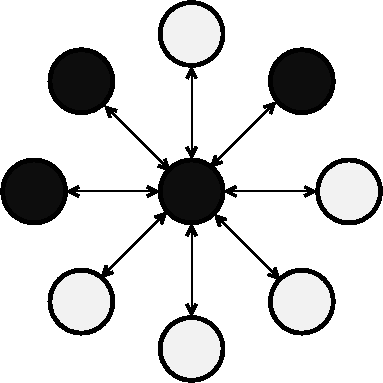}%{graph_star}
 \caption{Example of a star graph with $n=8$ (1 center and 8 leaves, totalling 9 vertices). Different colors indicates different occupancies (resident or mutant); links between vertices indicate the edges of the graph.}
 \label{fig}
\end{figure}

Note that this expression is different from the one conjectured in~\cite{Lieberman_Hauert_Nowak_Nature2005}, discussed above, for the case $k=2$.

In section~\ref{sec:evol_star} we discuss the results presented in~\cite{Broom_Rychtar_PRSA2008}. In section~\ref{sec:asymp}, we rigorously derive Eq.~(\ref{estrela_assintotica}) and in section~\ref{sec:complete}, we prove that the star is an accelerator of evolution, if $n$ is large enough. We study this problem numerically and present our conclusions in the final section~\ref{sec:final}.

\section{Evolutionary dynamics in the star graph}
\label{sec:evol_star}

In~\cite{Broom_Rychtar_PRSA2008},
the exact expression
\begin{equation}\label{estrela_prob_fixacao_genr}
 \rho_n(r)=\frac{\frac{r}{n+r}+\frac{n^2r}{nr+1}}{(n+1)\left[1+\frac{n}{n+r}\sum_{j=1}^{n-1}\left(\frac{n+r}{r(nr+1)}\right)^j\right]}\\
\end{equation}
for the invasion probability of mutant with fitness $r$ in a star graph with $n$ leaves was found. 

 We agree with  formula~(\ref{estrela_prob_fixacao_genr}), however the asymptotic expression found in the same reference when $n\to\infty$ seems to not be correct. In particular, the asymptotic expression
 \[
  \frac{n\frac{nr}{nr+1}+\frac{r}{n+r}}{(n+1)\cdot\left(1+\frac{n}{n+r}\sum_{j=1}^{n-1}\left(\frac{n+r}{r(nr+1)}\right)^j\right)}\approx\frac{1}{1+\sum_{j=1}^{n-1}\frac{1}{r^{2j}}}=\frac{1-r^{-2}}{1-r^{-2n}}\ ,
 \]
when $n\to\infty$ is wrong (see~\cite[page 2616]{Broom_Rychtar_PRSA2008}).

\section{Asymptotic Expression}
\label{sec:asymp}

In this section, we will show that 
the associated asymptotic expression for $\rho_n(r)$,  when $n\to\infty$, is given by %, i.e.,
\begin{equation}\label{asymptotic_def}
 \rho_n(r)=\rho_{\lo}(r,n)\left(1+\bigO{n^{-1}}\right)\ ,
\end{equation}
where $\rho(r,n)$ was defined in Eq.~(\ref{estrela_assintotica}).

If $r\ne 1$, Eq.~(\ref{estrela_prob_fixacao_genr}) can be simplified to
\begin{equation}\label{estrela_prob_fixacao}
\rho_n(r)=\frac{r}{n+1}\frac{n^3+n^2r+nr+1}{n^2r+nr^2+n+r}
\left\{1+\frac{n}{n+r}\left[\frac{\frac{n+r}{r(nr+1)}-\left(\frac{n+r}{r(nr+1)}\right)^{n}}{1-\frac{n+r}{r(nr+1)}}\right]\right\}^{-1}\ .
\end{equation}

Now, we consider the asymptotic expression of $\rho_n(r)$ when $n\to\infty$ and $r\ne 1$. Note first that
\[
 \frac{r}{n+1}\frac{n^3+n^2r+nr+1}{n^2r+nr^2+n+r}=1+\bigO{n^{-1}}\ .
 \]
Furthermore
\begin{equation}\label{asymp_inverse}
1+\frac{n}{n+r}\left[\frac{\frac{n+r}{r(nr+1)}-\left(\frac{n+r}{r(nr+1)}\right)^{n}}{1-\frac{n+r}{r(nr+1)}}\right]\\
%&\qquad
=1+\frac{1}{r^2-1}\left[1-\left(\frac{n+r}{r(nr+1)}\right)^{n-1}\right]
\end{equation}

We also write
\[
 \frac{n+r}{r(nr+1)}=\frac{1}{r^2}\times \frac{1+\frac{r-1}{n+1}}{1+\frac{1-r}{r(n+1)}}=\frac{1}{r^2}\left(1+\frac{r^2-1}{nr}+\bigO{n^{-2}}\right)\ .
\]
Now, we use that
 \[
 \left(1+\frac{x}{n}+\bigO{n^{-2}}\right)^n=\e^{n\log\left(1+\frac{x}{n}+\bigO{n^{-2}}\right)}=
 \e^{x+\bigO{n^{-1}}}=\e^x\left[1+\bigO{n^{-1}}\right]\ .
  \]
and then
\begin{equation}\label{asymp_frac}
 \left(\frac{n+r}{r(nr+1)}\right)^{n-1}%=\frac{1}{r^{2(N-2)}}\left(1+\bigO{N^{-1}}\right)\e^{(r^2-1)/r}\left(1+\bigO{N^{-1}}\right)
 =\frac{\e^{(r^2-1)/r}}{r^{2(n-1)}}\left(1+\bigO{n^{-1}}\right)\ .
\end{equation}
We put together Eqs.~(\ref{asymp_inverse}) and~(\ref{asymp_frac}) and find
\begin{align*}
& 1+\frac{n}{n+r}\left[\frac{\frac{n+r}{r(nr+1)}-\left(\frac{n+r}{r(nr+1)}\right)^{n}}{1-\frac{n+r}{r(nr+1)}}\right]\\
&\qquad=1+\frac{1}{r^2-1}-\frac{1}{r^2-1}\frac{\e^{(r^2-1)/r}}{r^{2(n-1)}}\left(1+\bigO{n^{-1}}\right)\\
& \qquad
 =\frac{1}{1-r^{-2}}-\frac{1}{1-r^{-2}}\frac{\e^{(r^2-1)/r}}{r^{2n}}\left(1+\bigO{n^{-1}}\right)\\
 &\qquad=
 \frac{1-r^{-2n}\e^{(r^2-1)/r}}{1-r^{-2}}\left(1+\bigO{n^{-1}}\right)\ .
\end{align*}
Gathering all asymptotic expansions, and using the definition~(\ref{asymptotic_def}), we find expression~(\ref{estrela_assintotica}).

\section{The star is an accelerator of evolution}
\label{sec:complete}

Now, let us compare the Moran process in the star and in the complete graph. 

For neutral evolution ($r=1$), we have the following simple result:
\begin{lem}
$\rho_n(1)=\frac{1}{n+1}=\rho^\complete(1,n+1)$.
\end{lem}

\begin{proof}
See Eq.~(\ref{estrela_prob_fixacao_genr}) with $r=1$. Equivalently, this result can be proved from Eq.~(\ref{estrela_prob_fixacao}) in the limit $r\to 1$.
\end{proof}

A similar result is valid for a large class of graphs; in particular, it is valid for graphs with at most one \emph{root} (a root is a vertex with no edge leading to it). However, a single mutant cannot invade a graph with two or more roots, and, in this case, the invasion probability is equal to zero.  See~\cite{Nowak:06}.

Note that
$\lim_{r\to 1}\rho(r,n)=\frac{1}{n-1}\ne\frac{1}{n+1}$, showing that the large population limit $n\to\infty$ and the weak selection limit $r\to 1$ are not interchangeable.

For $r\ne1$, we show that the star is an accelerator of evolution:
\begin{thm}\label{accelerator}
Consider $r>1$ ($r<1$). Then, for $n$ large enough,  
$\rho_n(r)>\rho^{\complete}(r,n+1)$ ($<$, respect.), and therefore the star is an accelerator of evolution. 
\end{thm}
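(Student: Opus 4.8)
The plan is to leverage the asymptotic expansion $\rho_n(r) = \rho_{\lo}(r,n)\left(1 + \bigO{n^{-1}}\right)$ established in Section~\ref{sec:asymp}, together with the closed form $\rho^{\complete}(r,n+1) = \frac{1-r^{-1}}{1-r^{-(n+1)}}$, and to split the analysis into the two regimes $r > 1$ and $r < 1$. The reason for the split is that the behaviour of the two fixation probabilities as $n \to \infty$ is genuinely different: for an advantageous mutant both converge to distinct positive limits, whereas for a disadvantageous mutant both vanish, and the comparison must then be carried out at the level of their decay rates.

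First I would treat the case $r > 1$. Here $r^{-2n} \to 0$ and $r^{-(n+1)} \to 0$, so the asymptotic expression gives $\lim_{n\to\infty} \rho_n(r) = 1 - r^{-2}$, while the closed form gives $\lim_{n\to\infty} \rho^{\complete}(r,n+1) = 1 - r^{-1}$. Since $r > 1$ forces $r^{-1} > r^{-2}$, we have $1 - r^{-2} > 1 - r^{-1}$, i.e. the two limits are strictly separated. Because the sequences converge to these distinct limits, the strict inequality $\rho_n(r) > \rho^{\complete}(r,n+1)$ must hold for all $n$ large enough.

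Next I would treat $r < 1$, where both probabilities tend to $0$ and a comparison of limits is uninformative; instead I would compare the leading exponential rates. From the asymptotic expression the star probability behaves like a positive constant times $r^{2n}$, while the closed form gives $\rho^{\complete}(r,n+1) \sim (r^{-1}-1)\,r^{n+1}$. Forming the ratio and using $r^{2n}/r^{n+1} = r^{n-1}$, one finds $\rho_n(r)/\rho^{\complete}(r,n+1) = \bigO{r^{n-1}} \to 0$ since $r < 1$, whence $\rho_n(r) < \rho^{\complete}(r,n+1)$ for $n$ large enough. In both regimes the conclusion that the star is an accelerator then follows directly from the definition in the introduction, namely that $\rho > \rho^{\complete}$ holds if and only if $r > 1$.

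The main obstacle is the $r < 1$ regime, where the comparison is between two quantities that both vanish, so I must track the leading exponential rate rather than a limit. The one point requiring care is to confirm that the multiplicative correction $\left(1 + \bigO{n^{-1}}\right)$ in the expansion of $\rho_n(r)$ cannot reverse the inequality; this is not delicate once the rates are in hand, since the two decay rates differ by the factor $r^{n-1}$, which tends to $0$ faster than any polynomial and hence dominates the harmless relative error $\bigO{n^{-1}}$. A minor bookkeeping point is to keep the vertex counts aligned: the star has $n+1$ vertices, so the correct comparison is with $\rho^{\complete}(r,n+1)$, which merely shifts the complete-graph exponent by one and affects none of the conclusions.
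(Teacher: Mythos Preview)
Your proof is correct. Both you and the paper ultimately establish that $\lim_{n\to\infty}\rho_n(r)/\rho^{\complete}(r,n+1)$ equals $(r+1)/r>1$ for $r>1$ and $0$ for $r<1$, but the organisation differs. The paper factors the ratio through the intermediate quantity $\rho^{\complete}(r^2,n+1)$, writing
\[
\frac{\rho_n(r)}{\rho^{\complete}(r,n+1)}=\bigl(1+\bigO{n^{-1}}\bigr)\cdot\frac{\rho_{\lo}(r,n)}{\rho^{\complete}(r^2,n+1)}\cdot\frac{\rho^{\complete}(r^2,n+1)}{\rho^{\complete}(r,n+1)}
\]
and computing the two right-hand limits separately; the point of the intermediate is that $\rho_{\lo}$ and $\rho^{\complete}(r^2,\cdot)$ share the factor $r^{-2n}$, so their ratio is finite and nonzero in both regimes, and all of the degeneracy for $r<1$ is pushed into the second factor. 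You instead treat the two regimes directly: for $r>1$ by comparing the distinct positive limits $1-r^{-2}$ and $1-r^{-1}$, and for $r<1$ by comparing the exponential decay rates $r^{2n}$ versus $r^{n+1}$. Your route is more elementary and avoids introducing an auxiliary object; the paper's factorisation, on the other hand, makes explicit the relationship between $\rho_{\lo}$ and the widely-cited approximation $\rho^{\complete}(r^2,\cdot)$ from \cite{Lieberman_Hauert_Nowak_Nature2005}, which has independent interest in the context of the paper.
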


\begin{proof}
We initially see that
\[
 \lim_{n\to\infty} \frac{\rho_{\lo}(r,n)}{\rho^{\complete}(r^2,n+1)}=\lim_{n\to\infty} \frac{1-r^{-2(n+1)}}{1-r^{-2n}\e^{(r^2-1)/r}}=\left\{
 \begin{array}{ll}
 r^{-2}\e^{-(r^2-1)/r}>1\ ,&\quad r<1\ ,\\
 1\ ,&\quad r>1\ . 
\end{array}\right.
\]
On the other hand
\begin{equation*}
 \frac{\rho^{\complete}(r^2,n+1)}{\rho^{\complete}(r,n+1)}=\frac{(1-r^{-(n+1)})(1-r^{-2})}{(1-r^{-2(n+1)})(1-r^{-1})}\ .
\end{equation*}
Taking the limit $n\to\infty$, we find
\[
 \lim_{n\to\infty} \frac{\rho^{\complete}(r^2,n+1)}{\rho^{\complete}(r,n+1)}=\left\{
 \begin{array}{ll}
  0\ ,&\quad r< 1\ ,\\
  \frac{1-r^{-2}}{1-r^{-1}}=\frac{r+1}{r}\ ,&\quad r>1\ .
 \end{array}\right.
\]
Finally, we write  
\begin{equation}\label{limit_n}
 \frac{\rho_n(r)}{\rho^{\complete}(r,n+1)}=\left(1+\bigO{n^{-1}}\right)\times\frac{\rho_{\lo}(r,n)}{\rho^{\complete}(r^2,n+1)}\times\frac{\rho^{\complete}(r^2,n+1)}{\rho^{\complete}(r,n+1)}
\end{equation}
and taking the limit $n\to\infty$, we conclude that
\[
\lim_{n\to\infty} \frac{\rho_n(r)}{\rho^{\complete}(r,n+1)}=\left\{
\begin{array}{ll}
 0\ ,&\quad r<1\ ,\\
 \frac{r+1}{r}>1\ ,&\quad r>1\ .
\end{array}\right.
\]
For $n$ large enough $\rho_n(r)>\rho^{\complete}(r,n+1)$ if and only if $r>1$.
\end{proof}

\section{Discussion}
\label{sec:final}

Despite the fact that the expression conjectured in~\cite{Lieberman_Hauert_Nowak_Nature2005} is wrong for $k=2$ and $k\ge 5$, it has been widely used (e.g.,~\cite{Frean_Rainey_Traulsen,Zhang_etal_2012,Broom_etal_2012}). However, exact or asymptotic expressions for $k>2$ have not been found and should be the subject of further investigations. See also~\cite{Allen_Nowak,Shakarian} for a  review of evolutionary graph theory.

Differences between expression~(\ref{estrela_assintotica}) and the expression $\frac{1-r^{-2}}{1-r^{-2n}}$ given by~\cite{Broom_Rychtar_PRSA2008} and~\cite{Lieberman_Hauert_Nowak_Nature2005} are hardly noticeable from the numerical point of view. 
In fact, for $r>1$, the difference between both expressions are exponentially small, as $r^{-n}\to 0$ exponentially fast when $n\to\infty$. However, for $r<1$ these differences, despite being small, cannot be neglected. 
See table~\ref{tab} and figure~\ref{plot} for numerical comparisons.

\begin{center}
\begin{table}
\centering
 \begin{tabular}{cc|c|cc|cc}%{ll|l|ll|ll}
  $r$&$n$&$\rho_n(r)$&$\rho_{\lo}(r,n)$&$\epsilon$&$\frac{1-r^{-2}}{1-r^{-2n}}$&$\epsilon$\\
  \hline
$0.800$&$500$&$1.08\times 10^{-97}$&$1.09\times 10^{-97}$&$0.00629$&$6.92\times 10^{-98}$&$0.358$\\
$0.800$&$1000$&$1.33\times 10^{-194}$&$1.34\times 10^{-194}$&$0.00320$&$8.51\times 10^{-195}$&$0.360$\\
$0.900$&$500$&$5.04\times 10^{-47}$&$5.06\times 10^{-47}$&$0.00509$&$4.10\times 10^{-47}$&$0.186$\\
$0.900$&$1000$&$8.83\times 10^{-93}$&$8.85\times 10^{-93}$&$0.00255$&$7.17\times 10^{-93}$&$0.188$\\
  $0.950$& $1000$& $3.34\times 10^{-46}$& $3.35\times 10^{-46}$&$0.00224$&$3.03\times10^{-46}$&$0.0955$\\
  $0.950$&$5000$&$2.06\times 10^{-224}$&$2.06\times 10^{-224}$&$0.000448$&$1.86\times 10^{-224}$&$0.0971$\\
  $0.990$&$1000$&$3.85\times 10^{-11}$&$3.86\times 10^{-11}$&$0.00203$&$3.78\times 10^{-11}$&$0.0179$\\
  $0.990$&$5000$&$4.66\times 10^{-46}$&$4.66\times 10^{-46}$&$0.000459$&$4.57\times 10^{-46}$&$0.0195$\\
  $1.01$&$1000$&$0.0197$&$0.0197$&$0.00204$&$0.0197$&$0.00204$\\
  $1.05$&$1000$&$0.0928$&$0.0930$&$0.00197$&$0.0930$&$0.00197$\\
  $1.01$&$5000$&$0.0197$&$0.0197$&$0.000484$&$0.0197$&$0.000484$\\
  $1.05$&$5000$&$0.0929$&$0.0930$&$0.000410$&$0.0930$&$0.000410$\\
 \end{tabular}

\caption{Invasion probabilities for different values of $r$ and $n\gg 1$: $\rho_n$ indicates the exact value given by Eq.~(\ref{estrela_prob_fixacao_genr}); in the third column, we compute the
approximate value given by Eq.~(\ref{estrela_assintotica}) and its associated error $\epsilon=|\rho_{\lo}(r,n)-\rho_n(r)|/\rho_n(r)$; the last column indicates the approximation in references~\cite{Lieberman_Hauert_Nowak_Nature2005,Nowak:06,Broom_Rychtar_PRSA2008} and its associated error. Note that for $r>1$, the two approximations are essentially equivalent; for $r<1$ the first approximation is clearly superior. All calculations were performed in Sage v. 5.13.}
\label{tab}
\end{table}
\end{center}

\begin{figure}
 \centering
 \includegraphics[width=0.45\textwidth]{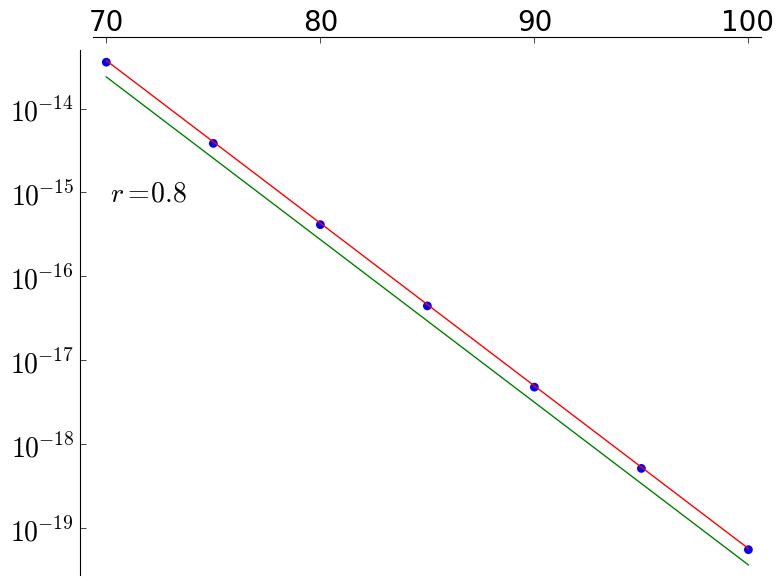}\quad%{r08}\quad
 \includegraphics[width=0.45\textwidth]{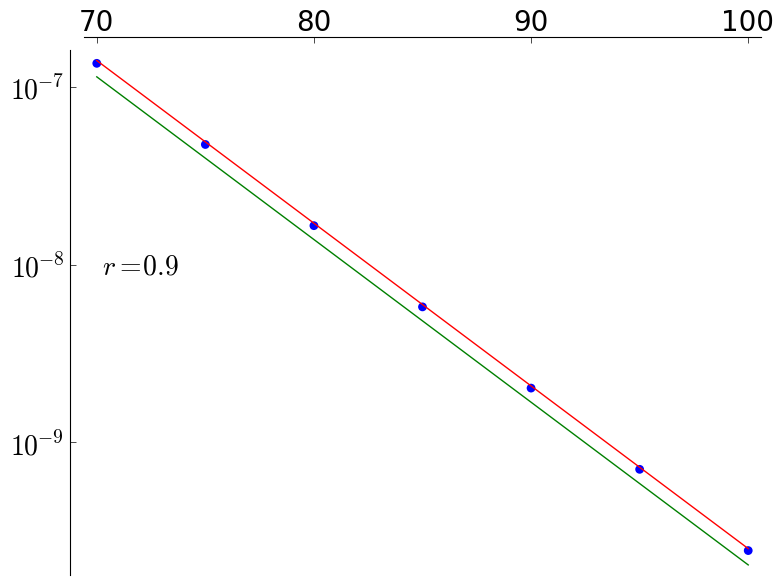}\newline%{r09b}\newline
 \includegraphics[width=0.45\textwidth]{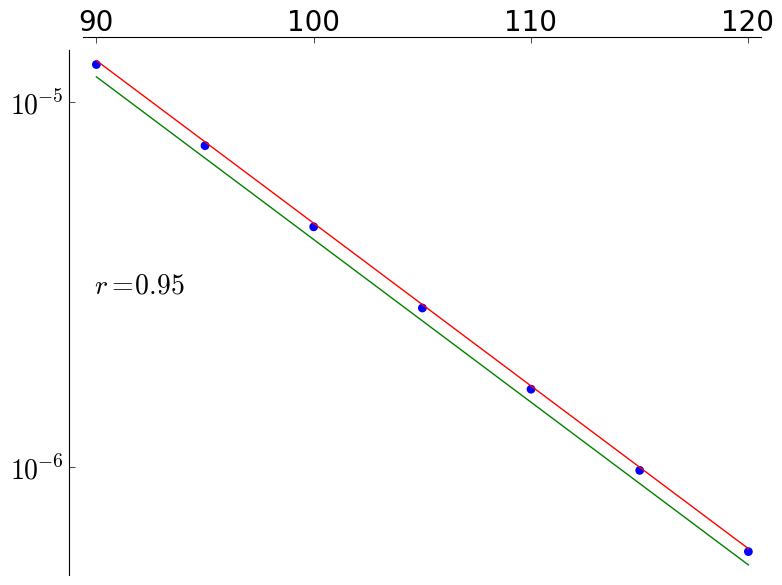}\quad%{r095b}\quad
 \includegraphics[width=0.45\textwidth]{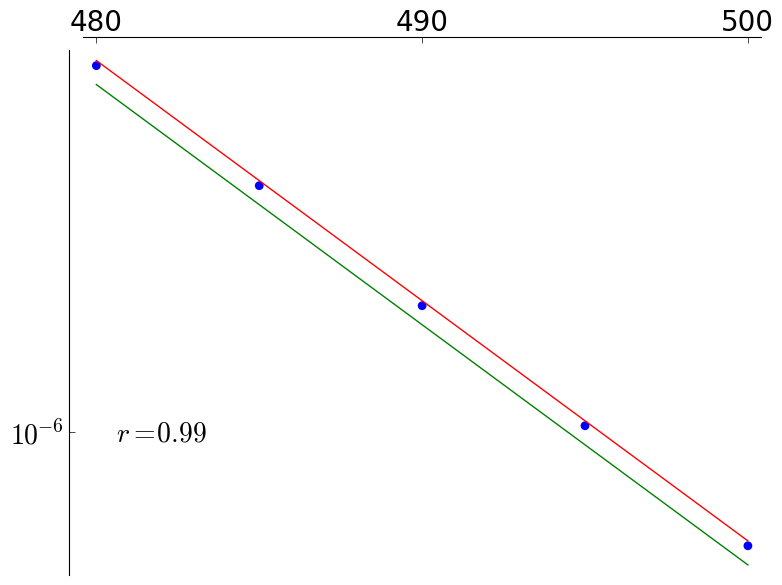}%{r099}
\caption{(Color online) For different values of $n$ ($x$ axis), we plot $\rho_n(r)$ (blue dots), $\rho_\lo(n,r)$ (red line) and $\frac{1-r^{-2}}{1-r^{-2n}}$ (green line). We use a logarithmic scale in the $y$ axis. Note that the red line is consistently a better approximation than the green one with respect to the exact values.}
\label{plot}
 \end{figure}

 \section*{Acknowledgements}
 This work was supported by CMA/FCT/UNL and FCT/Portugal, under the project  UID/MAT/00297/2013. FACCC has also a ``Investigador FCT'' (FCT/Portugal) grant.
 FACCC also thanks the careful reading of the manuscript by the referees, in particular an inconsistency pointed out by one of the reviewers that allowed an improvement in the statement and proof of Theorem~\ref{accelerator}.
%  
% \bibliography{Chalub}
% \bibliographystyle{unsrt}

\end{document}